\newtheorem{theorem}{Theorem}
\title{A Note on Computational Complexity of Dou Shou Qi}
\author{Zhujun Zhang \thanks{E-mail: zhangzhujun1988@163.com} \\Water Bureau of Fengxian District, Shanghai}
\date{April 27, 2019} 
\providecommand{\keywords}[1]{\textbf{\textit{Index terms---}} #1}
\begin{document}
\maketitle

\begin{abstract}

Dou Shou Qi is a Chinese strategy board game for two players.
We use a EXPTIME-hardness framework to analyse computational complexity of the game. 
We construct all gadgets of the hardness framework.
In conclusion, we prove that Dou Shou Qi is EXPTIME-complete.

\end{abstract}

\keywords{Dou Shou Qi, computational complexity, EXPTIME-complete, hardness framework}

\section{Introduction}

Dou Shou Qi is a modern Chinese strategy board game for two players, and its origins are not entirely clear.
The game is also called Jungle, The Jungle Game, Children's Chess, Oriental Chess and Animal Chess.
Dou Shou Qi is played on a rectangular board consisting of 7$\times$9 squares, and pieces move on the squares as in Chess.
Figure \ref{doushouqi} illustrates initial configuration of the game.
The rules of Dou Shou Qi could be found in \cite{doushouqiwiki} and \cite{doushouqichessvariants}, and we just introduce basic rules of the game here.

\begin{figure}[htbp]
	\centering  
	\includegraphics[width=0.4 \linewidth]{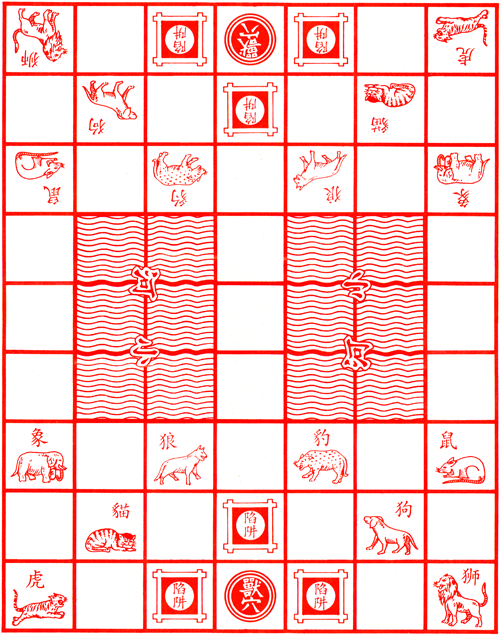}  
	\caption{Initial configuration of Dou Shou Qi.}  
	\label{doushouqi}   
\end{figure}

Each player controls eight different pieces representing different animals. 
Each animal has a certain rank, according to which they can capture opponent’s pieces. 
The pieces, from lower rank to higher rank, are:
Rat, Cat, Dog, Wolf, Leopard, Tiger, Lion and Elephant. 
Only pieces with equal or higher rank may capture an opponent’s piece. 
The only exception to this rule is that the weakest rat may capture the strongest elephant, just like the spy in Stratego.

In Dou Shou Qi, players alternate moves, and all pieces can move one square horizontally or vertically.
There are three special squares of the game board: Den, Trap and River.
The player who is first to move any one of their pieces into the opponent's den wins the game.
A piece that enters one of the opponent's trap squares is reduced in rank to 0.
Thus the trapped piece may be captured by the defending side with any piece, regardless of rank. 
A trapped piece has its normal rank restored when it exits an opponent's trap square.
A piece in one of its own traps is unaffected. 
The rat is the only piece that may move into a river square.
Only the lion and tiger may jump over a river horizontally or vertically. 
They jump from a square on one edge of the river to the next non-river square on the other side. 
If that square contains an enemy piece of equal or lower rank, the lion or tiger may capture it. 

In this note, we review related work in Section 2;
in Section 3, we describe notations and discuss complexity of Dou Shou Qi; we summarize this note in Section 4.

\section{Related Work}

In last years, people researched computational complexity and endgame database of Dou Shou Qi.
In 2010, Burnett \cite{doushouqisubproblem} analysed loosely coupled subproblems of Dou Shou Qi.
In 2012, van Rijn proved Dou Shou Qi to be PSPACE-hard by reduction from a game called Bounded 2CL introduced in \cite{constraintlogic}.
In 2013, van Rijn and Vis \cite{doushouqiendgame1} \cite{doushouqiendgame2} provided another PSPACE-hardness proof of Dou Shou Qi, and they established an endgame database containing all configurations with up to four pieces.
In 2016, Bohrweg \cite{doushouqiendgame3} computed the seven-piece endgame database of Dou Shou Qi with a parallel variation of retrograde analysis. 

In last decades, many classic two-player board games were proved to be computationally hard.
In 1980, Lichtenstein and Sipser \cite{gopspace} proved Go with the superko rule to be PSPACE-hard, and Reisch \cite{gomoku} proved Gomoku (Gobang) to be PSPACE-complete.
In 1981, Fraenkel and Lichtenstein \cite{chess} proved Chess to be EXPTIME-complete.
In 1983, Robson \cite{goexptime} proved  Go with the basic ko rule to be EXPTIME-complete.
In 1984, Robson \cite{checkers} proved Checkers to be EXPTIME-complete.
In 1987, Adachi et al. \cite{shogi} proved Shogi to be EXPTIME-complete.
In 1994, Iwata and Kasai \cite{othello} proved Othello (Reversi) to be PSPACE-complete.
Demaine and Hearn review results
about the complexity of many classic games in their survey paper \cite{playinggameswithalgorithms}.
Other results about complexity of well-known games could be found in Wiki page ``Game complexity'' \cite{gamecomplexity}.

More recently, researchers focus on frameworks which could be used to prove hardness of games.
In 2010, Fori\v{s}ek \cite{2dplatformgames} introduced a basic NP-hardness framework for 2D platform games.
In 2012, Aloupis, Demaine and Guo \cite{nintendogamesnphard} introduced a elegant NP-hardness framework, and they used the framework to analyse complexity of several video games.
In 2014, Viglietta \cite{gamehardjob} established some general schemes relating the computational complexity of 2D platform games. 
In 2015, Aloupis et al. \cite{nintendogameshard} introduced a new  PSPACE-hardness framework.
In our note \cite{mynotehardnessframework}, we introduced a EXPTIME-hardness framework for 2D games, and we will use this framework to discuss complexity of Dou Shou Qi in this note.

\section{Complexity of Dou Shou Qi}

Before we discuss complexity of Dou Shou Qi, we introduce the notations used in this note.
Pieces and special squares table of the game is illustrated in Figure \ref{piecestable}.
We use ``Red'' and ``Black'' to represent two players in the game.
We use two capital letters to represent one piece in this note, for example, ``RE'' means ``red elephant''.
In this note, we just need four types of pieces (elephant, lion, tiger and cat) to establish reduction.
We use different colour squares to represent dens, traps and rivers in Dou Shou Qi.

\begin{figure}[htbp]
	\centering  
	\includegraphics[width=0.65 \linewidth]{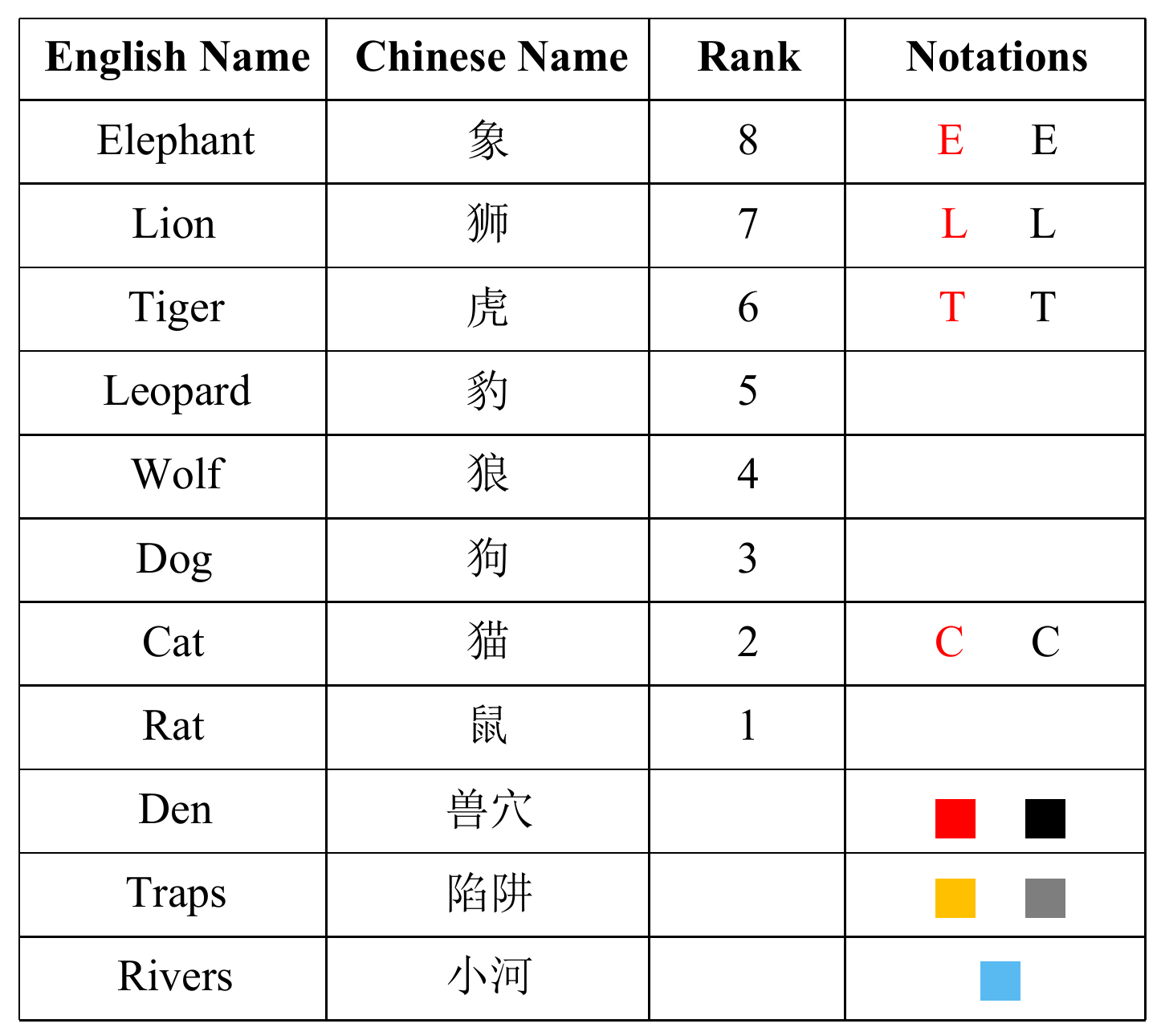}  
	\caption{Pieces and special squares table of Dou Shou Qi.}  
	\label{piecestable}   
\end{figure}

Decision problem of Dou Shou Qi is to decide whether Red has a forced win in a given position.
There are eight pieces for each player, and there are three special squares, thus there are at most $(8 \times 2 + 1) \times (3 + 1) = 68$ states for each square.
The number of all configuration in a $n \times n$ game board is bounded by $68 ^ {n^2}$, therefore Dou Shou Qi is in EXPTIME.
To prove EXPTIME-hardness of Dou Shou Qi, we need to construct all gadgets of the EXPTIME-hardness framework introduced in our note \cite{mynotehardnessframework}.

In our reduction, each player control one tiger respectively, and these two tigers indicate the avatars in the hardness framework.
The start, finish, turn, switch, merge and crossover gadgets are trivial.
Our one-way gadget is inspired by van Rijn and Vis \cite{doushouqiendgame1}.
Since the BBB, BRB and RRB door gedgets could be constructed symmetrically, we just need to construct RRR, RBR and BBR door gadgets for Dou Shou Qi.

In all these gadgets for Dou Shou Qi, paths for the avatars are composed of river squares so that only tigers and lions may enter and traverse these paths.
On the other hand, the tigers can not leave the paths, since the paths are surrounded by a large number of unmoveable elephants.
The border between unmoveable red elephants and black elephants are river squares, thus these elephants can not attack each other.

\textbf{Start, finish, turn, switch and merge gadgets.} 
It is easy to construct start, finish, turn, switch and merge gadgets for Dou Shou Qi, and these gadgets are illustrated in Figure \ref{start} \ref{finish} \ref{turn} \ref{switch} respectively. 
In the start gadget, the RT can move east to leave the gadget.
Once the RT enters the finish gadget from west, it can arrive black den, and Red wins immediately. 
The RT can enter a turn gadget from north, and it can move east to leave. 
When the RT enters a switch gadget from south, it can move east or west to leave. 
Moreover, a merge gadget is identical to a switch gadget in Dou Shou Qi.
The start, finish, turn, switch and merge gadgets for Black could be constructed symmetrically.

\textbf{Crossover gadget.} 
Three types of crossover gadgets are identical in Dou Shou Qi, and Figure \ref{crossover} illustrates a crossover gadget. 
The RT and BT can traverse a crossover gadget through two paths, and there is no leakage between two paths of the gadget.
For one path, the RT (BT) enters the gadget from west, and then it moves east and jumps over the rivers to leave. 
Situation of another path is similar. 

\textbf{One-way gadget.} 
Figure \ref{oneway} illustrates an one-way gadget for Dou Shou Qi. 
In this gadget, only one BC is moveable. 
The RT can only traverse the gadget from west to east. 
When the RT enters an one-way gadget from west, the BC has to move south or north to avoid being captured.
Then the RT can move east to traverse the gadget, and the BC should move back to the square where it starts from when the RT leaves the gadget.
On the other hand, if the RT enters an one-way gadget from east, it can not traverse the gadget, since it will be captured by the BC at one of two black traps (gray squares).
Moreover, the RT can never capture the BC in an one-way gadget, and the BC can not leave the gadget since only Lion and Tiger may jump over rivers in Dou Shou Qi.
The one-way gadget for Black also could be constructed symmetrically.

\begin{figure}
	\begin{minipage}[htbp]{0.5\linewidth}
		\centering
		\includegraphics[width=0.5 \linewidth]{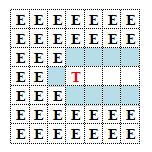}
		\caption{Start gadget.}
		\label{start}
	\end{minipage}%
	\begin{minipage}[htbp]{0.5\linewidth}
		\centering
		\includegraphics[width=0.5 \linewidth]{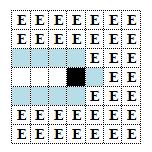}
		\caption{Finish gadget.}
		\label{finish}
	\end{minipage}
\end{figure}

\begin{figure}
	\begin{minipage}[t]{0.5\linewidth}
		\centering
		\includegraphics[width=0.5 \linewidth]{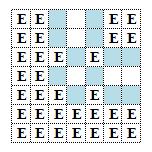}
		\caption{Turn gadget.}
		\label{turn}
	\end{minipage}%
	\begin{minipage}[t]{0.5\linewidth}
		\centering
		\includegraphics[width=0.5 \linewidth]{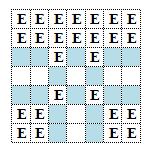}
		\caption{Switch and merge gadgets.}
		\label{switch}
	\end{minipage}
\end{figure}

\begin{figure}
	\begin{minipage}[t]{0.5\linewidth}
		\centering
		\includegraphics[width=0.5 \linewidth]{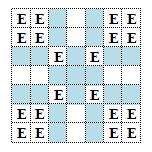}
		\caption{Crossover gadget.}
		\label{crossover}
	\end{minipage}
	\begin{minipage}[t]{0.5\linewidth}
		\centering
		\includegraphics[width=0.68 \linewidth]{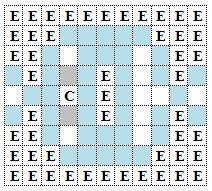}
		\caption{One-way gadget.}
		\label{oneway}
	\end{minipage}%
\end{figure}

\textbf{RRR door gadget.} 
Figure \ref{RRRdoor} illustrates a RRR door gadget for Dou Shou Qi. 
In this gadget, only five RE's (at r11, c11, h6, p6 and x11), eight RC's (at i9, i10, i12, i13, o9, o10, o12 and o13) and three BL's (at f11, l5 and r14) are moveable. 
We call the RE at r11 control RE.
When control RE stops at r11, the gadget is considered in the closed state; when control RE stops at f11, the gadget is considered in the open state.

\begin{figure}[htbp]
	\centering  
	\includegraphics[width=0.9 \linewidth]{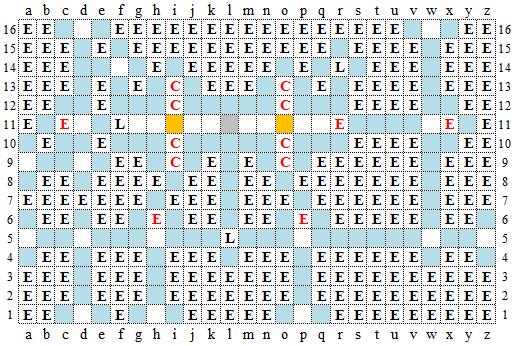}  
	\caption{RRR door gadget.}  
	\label{RRRdoor}   
\end{figure}

Squares a5 and d1 are the entrance and the exit of the open path respectively. 
Path of square sequence (a5, d5, d1) is an open path. 
Squares a9 and d16 are the entrance and the exit of the traverse path respectively. 
Path of square sequence (a9, d9, d11, d16) is a traverse path. 
Squares z5 and w16 are the entrance and the exit of the close path respectively. 
Path of square sequence (z5, w5, w11, w16) is a close path.
Squares h1 and p1 are entrances of rapid checkmate paths for Red. 

The RT can open the gadget by traversing the open path. 
When the RT moves to d5, control RE can move to f11, which forces the BL at f11 move to f14.
If the BL at l5 captures control RE at l11 (black trap square), the RT can enter the rapid checkmate path via h5.
Once control RE moves to f11, the gadget is in the open state, since the RT can traverse the traverse path safely.
Moreover, the BL at r14 should move to r11 in order to protect square w11 when control RE moves west.

The RT can traverse the traverse path if and only if the gadget is in the open state.
If control RE stops at f11, the BL can not protect square d11, thus the RT can traverse the path. 
If the gadget is in the closed state, the RT will be captured by the BL at f11 when it traverses the traverse path and moves to d11.

When the RT traverses the close path, it has to close the gadget.
To traverse the path, the RT must pass square w11, however, the square is protected by the BL at r11. 
Thus, after the RT moves to w5, control RE must move to r11 to drive the BL away, which makes the gadget in the closed state.
Then the RT can traverse the close path safely.
When control RE moves to l11, the BL at l5 can not capture control RE, since the RT can enter rapid checkmate path via p5.
Moreover, the BL at f14 should move to f11 in order to protect square d11 when control RE moves east.

Notice, the RE's and RC's can not leave the gadget, since they may not jump over rivers.
If the BL's attempt to leave the gadget, they will be captured by the RE's immediately.
If the RT is not at the RRR door gadget, control RE can not move to l11 (black trap square), since it will be captured by the BL at l5.
Three BL's can not move to i11 or o11 (red trap squares), since these two squares are protected by eight RC's.
Moreover, if the RT moves to f11 or r11, one of the BL's will capture it.

\textbf{RBR door gadget.}
Figure \ref{RBRdoor} illustrates a RBR door gadget for Dou Shou Qi. 
In this gadget, only four RE's (at f11, h6, p6 and x11), eight RC's (at i9, i10, i12, i13, o9, o10, o12 and o13) and two BL's (at l5 and r11) are moveable. 
We call the RE at f11 control RE.
When control RE stops at f11, the gadget is considered in the closed state; when control RE stops at r11, the gadget is considered in the open state.

Squares z5 and w16 are the entrance and the exit of the open path respectively. 
Path of square sequence (z5, w5, w11, w16) is an open path for Red.
Squares a11 and f16 are the entrance and the exit of the traverse path respectively. 
Path of square sequence (a11, f11, f16) is a traverse path for Black. 
Squares a5 and d1 are the entrance and the exit of the close path respectively. 
Path of square sequence (a5, d5, d1) is a close path for Red.
Squares h1 and p1 are entrances of rapid checkmate paths for Red. 

The open path in a RBR door gadget is identical to the close path in a RRR door gadget.
When the RT traverses the open path, it has to open the RBR door gadget.

The BT can traverse the traverse path if and only if the gadget is in the open state.
If control RE stops at f11, the RT can not traverse the path, since tigers can not capture elephants in Dou Shou Qi except in a trap square.

The close path in a RBR door gadget is identical to the open path in a RRR door gadget.
The RT can close the RBR door gadget by traversing the close path. 

Notice, the RE's and RC's can not leave the gadget.
If the BL's attempt to leave the gadget, they will be captured by the RE's.
If the RT is not at the RRR door gadget, control RE can not move to l11 (black trap square), since it will be captured by the BL at l5.
Two BL's and the BT can not move to i11 or o11 (red trap squares), since these two squares are protected by eight RC's.
Moreover, if the RT moves to r11, the BL at r14 will capture it.

\begin{figure}[htbp]
	\centering  
	\includegraphics[width=0.9 \linewidth]{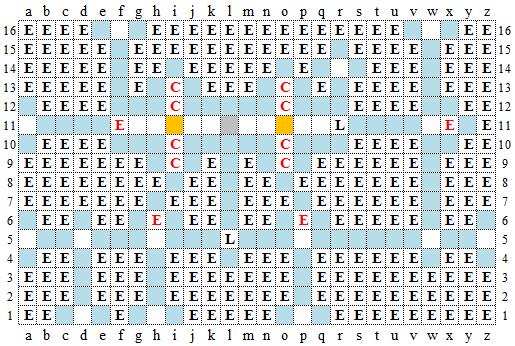}  
	\caption{RBR door gadget.}  
	\label{RBRdoor}   
\end{figure}

\begin{figure}[htbp]
	\centering  
	\includegraphics[width=0.9 \linewidth]{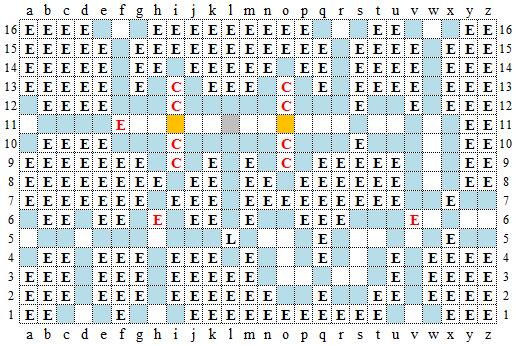}  
	\caption{BBR door gadget.}  
	\label{BBRdoor}   
\end{figure}

\textbf{BBR door gadget.} 
Figure \ref{BBRdoor} illustrates a RBR door gadget for Dou Shou Qi. 
In this gadget, only three RE's (at f11, h6 and v6), eight RC's (at i9, i10, i12, i13, o9, o10, o12 and o13) and one BL (at l5) are moveable. 
We call the RE at f11 control RE.
When control RE stops at f11, the gadget is considered in the closed state; when control RE stops at r11, the gadget is considered in the open state.

Squares z6 and w16 are the entrance and the exit of the open path respectively. 
Path of square sequence (z6, y6, w6, w8, w9, w10, w11, w16) is an open path for Black.
Squares a11 and f16 are the entrance and the exit of the traverse path respectively. 
Path of square sequence (a11, f11, f16) is a traverse path for Black. 
Squares a5 and d1 are the entrance and the exit of the close path respectively. 
Path of square sequence (a5, d5, d1) is a close path for Red.
Square h1 is entrance of rapid checkmate path for Red, and squares r1 and v1 are entrances of rapid checkmate paths for Black.

The BT can open the gadget by traversing the open path.
When the BT moves to y6, it can not move to w6 directly, since the RE at v6 protect the square.
The BL at l5 should move to t5 in order to help the BT pass w6.
If the RE captures the BT at w6, the BL can enter rapid checkmate path via v5.
After the BL leaves l5, control RE has to move to r11, otherwise, the BT can enter rapid checkmate path via r11 in certain steps.
Notice, control RE has sufficient time to move to r11 before the BT arrives r11.
When the BT moves to w11, the BL should move back to l5 in order to prevent control RE from moving west.

The traverse path and close path in a BBR door gadget are identical to the paths in a RBR door gadget.
The BT can traverse the traverse path if and only if the gadget is in the open state.
The RT can close the gadget by traversing the close path. 

Notice, the RE's and RC's can not leave the gadget.
If the BL attempts to leave the gadget, it will be captured by the RE's immediately.
If the RT is not at the RRR door gadget, control RE can not move to l11 (black trap square), since it will be captured by the BL at l5.
The BL and BT can not move to i11 or o11 (red trap squares), since these two squares are protected by eight RC's.
Moreover, if the RT stops at d5, the BT can neither traverse the open path nor open the gadget.
However, two different colour avatars will never enter a BBR door gadget at the same time in the EXPTIME-hardness framework.

As all gadgets of EXPTIME-hardness framework have been constructed in Dou Shou Qi, we obtain the following result.

\begin{theorem}
	Dou Shou Qi is EXPTIME-complete.
\end{theorem}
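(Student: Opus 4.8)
The plan is to combine the two halves of the claim, using the fact that all required gadgets have already been exhibited. For membership, the counting argument given above shows that a position on an $n\times n$ Dou Shou Qi board has at most $68^{n^2}$ configurations, so the game graph has size $2^{O(n^2)}$; since deciding whether Red has a forced win is an alternating-reachability question on this graph, it is solved by backward induction (retrograde analysis) in time polynomial in the graph size, i.e.\ in $2^{O(n^2)}$, which is $2^{\mathrm{poly}(n)}$, hence EXPTIME. For hardness I would invoke the EXPTIME-hardness framework of \cite{mynotehardnessframework}: that framework reduces an EXPTIME-complete problem to the problem of deciding, in an abstract two-avatar game played on a network built from start, finish, turn, switch, merge, crossover, one-way, and the six door gadgets, whether the Red avatar can reach its finish. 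It therefore suffices to realize each such gadget inside Dou Shou Qi, wire the realizations together on a polynomially large board, and check that the global construction is faithful.

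Next I would assemble the reduction explicitly. Given a framework network $N$, I place the Dou Shou Qi implementations of its gadgets (Figures~\ref{start}--\ref{BBRdoor}) on a grid and connect gadget ports by straight river corridors walled on both sides by immovable red and black elephants that meet along a river boundary, so the opposing elephant walls never attack one another and no tiger can ever step off a corridor. The two players' tigers RT and BT play the roles of the two avatars; every other piece used is either an immovable wall elephant or one of the finitely many movable defenders (control RE's, blocking RC's, guard BL's, the one-way BC) local to a single gadget. The initial position has Red to move with each tiger at its own start gadget. This construction is computable in time polynomial in $|N|$, hence polynomial in the size of the original EXPTIME instance.

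The core of the argument is then the local correctness of the gadgets, most of which is routine given the descriptions above: in the start, finish, turn, switch/merge and crossover gadgets the river corridors together with the elephant walls force a tiger to follow exactly the stated path with no leakage; in the one-way gadget the single movable BC and the two black trap squares let RT pass from west to east but capture it on any east-to-west attempt, while the BC itself can neither escape nor be captured. The substantive cases are the door gadgets, where for each of RRR, RBR and BBR one must verify: (i) traversing the open (resp.\ close) path necessarily drives the control RE to the open (resp.\ closed) position, because the guarded corridor square yields only when the control RE displaces the BL protecting it; (ii) the traverse path is passable exactly when the control RE sits in the open position, since otherwise a BL or an immovable elephant blocks the exit square; and (iii) neither side can cheat---the wall elephants and cats cannot leave, a BL that wanders out is captured, and the ``rapid checkmate'' branches entered via $h1$, $p1$, $r1$, $v1$ punish any refusal of the forced sequence or any attempt to grab a control RE. The three remaining doors BBB, BRB, RRB follow by colour-swapping.

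The main obstacle I expect is exactly point~(iii), the tempo analysis of the door gadgets: one must confirm that the control RE always has \emph{enough} spare tempo to reach $l11$ or $r11$ before the opposing tiger can exploit a rapid-checkmate entrance, that a defending lion's return move to re-guard $d11$ or $w11$ is always available in time, and that the move parity of the whole assembled board never hands a player an extra tempo with which to break a gadget; checking in addition that the movable pieces living in distinct gadgets cannot interact through the shared board is part of the same task. Once these local invariants are in place, composing them along $N$ shows that Red has a forced win in the constructed position if and only if the Red avatar wins in $N$, and together with membership this establishes that Dou Shou Qi is EXPTIME-complete.
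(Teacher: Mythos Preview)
Your proposal is correct and follows essentially the same approach as the paper: membership in EXPTIME via the $68^{n^2}$ configuration bound plus brute-force search, and EXPTIME-hardness by constructing all gadgets of the framework from \cite{mynotehardnessframework}. You are in fact considerably more explicit than the paper about the global assembly and the tempo/non-interference checks in point~(iii); the paper's own proof is the two-line summary you would write once those checks are declared done.
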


\begin{proof}
	Firstly, Dou Shou Qi could be solved by a brute search algorithm in exponential time, thus Dou Shou Qi is in EXPTIME. 
	Secondly, we have constructed all gadgets of the EXPTIME-hardness framework in Dou Shou Qi, thus Dou Shou Qi is EXPTIME-hard.
\end{proof}

\textbf{Remark.}
We take the liberty to freely use river squares and traps in the gadgets.
However, the original game board contains several properties, such as clustered river squares and only traps around the dens.
Moreover, We do not discuss the reachability of the instance constructed here.
In fact, it seems that the above instance of Dou Shou Qi will never occur in reality.

\section{Conclusion}

We use a EXPTIME-hardness framework to analyse computational complexity of Dou Shou Qi. 
We construct all gadgets of the hardness frameworks in Dou Shou Qi.
In conclusion, we prove that Dou Shou Qi is EXPTIME-complete.
In future, We will try to use the hardness frameworks to discuss complexity of other interesting board games, such as Stratego and Arimaa.


\bibliographystyle{plain}
\bibliography{ref}

\end{document}